\DeclareRobustCommand\mytikzVN{\tikz \node[circle,fill=white,draw,radius=0.2cm,inner sep=0.05cm] at (0,0) {\tiny{$=$}};}
\DeclareRobustCommand\mytikzCN{\tikz \node[circle,fill=white,draw,radius=0.2cm,inner sep=0.05cm] at (0,0) {\tiny{$+$}};}
\DeclareMathOperator\atanh{atanh}
\begin{document}
	\newcommand{\code}{\mathcal{C}}
\newcommand{\vecu}{\mathbf{u}}
\newcommand{\vecx}{\mathbf{x}}
\newcommand{\vecuhat}{\hat{\mathbf{u}}}
\newcommand{\vecc}{\mathbf{c}}
\newcommand{\vecchat}{\hat{\mathbf{c}}}
\newcommand{\vecy}{\mathbf{y}}
\newcommand{\G}{\mathbf{G}}
\newcommand{\GH}[1]{\bm{\mathsf{G}}_{#1}}
\newcommand{\T}[1]{\bm{\mathsf{T}}{(#1)}}

\newcommand{\mi}{\mathrm{I}}
\newcommand{\Prob}{\mathbf{P}}
\newcommand{\Z}{\mathrm{Z}}
\newcommand{\W}{\mathrm{W}}
\newcommand{\SC}{\mathrm{SC}}
\newcommand{\MAP}{\mathrm{MAP}}
\newcommand{\E}{\mathrm{E}}
\newcommand{\U}{\mathrm{U}}

\newcommand{\Amin}{\mathrm{A}_{\mathrm{min}}}
\newcommand{\dmin}{d}
\newcommand{\erfc}{\mathrm{erfc}}

\newcommand{\de}{\mathrm{d}}

\newcommand{\oleq}[1]{\overset{\text{(#1)}}{\leq}}
\newcommand{\oeq}[1]{\overset{\text{(#1)}}{=}}
\newcommand{\ogeq}[1]{\overset{\text{(#1)}}{\geq}}
\newcommand{\ogeql}[2]{\overset{#1}{\underset{#2}{\gtreqless}}}

\newcommand{\pscd}{\Prob(\mathcal{E}_{\SCD})}
\newcommand{\uED}{\hat{u}}
\newcommand{\LED}{L_i^{(\ED)}}
\newcommand{\inverson}[1]{\mathbb{I}\left\{#1\right\}}

\newtheorem{mydef}{Definition}
\newtheorem{prop}{Proposition}
\newtheorem{theorem}{Theorem}

\newtheorem{lemma}{Lemma}
\newtheorem{example}{Example}
\newtheorem{corollary}{Corollary}

\renewcommand{\qed}{\hfill\ensuremath{\blacksquare}}

\begin{acronym}
	\acro{BCJR}{Bahl, Cocke, Jelinek, and Raviv}
	\acro{BEC}{binary erasure channel}
	\acro{LDPC}{low-density parity-check}
	\acro{PC}{product code}
	\acro{LLR}{log-likelihood ratio}
	\acro{BLEP}{block error probability}
	\acro{B-DMC}{binary-input discrete memoryless channel}
	\acro{BSC}{binary symmetric channel}
	\acro{MAP}{maximum a posteriori}
	\acro{ML}{maximum likelihood}
	\acro{SPC}{single parity-check}
	\acro{SC}{successive cancellation}
	\acro{VN}{variable node}
	\acro{CN}{check node}
	\acro{RM}{Reed-Muller}
	\acro{RV}{random variable}
	\acro{SCC}{super component codes}
	\acro{B-AWGN}{binary input additive white Gaussian noise}
	\acro{CER}{codeword error rate}
	\acro{BER}{bit error rate}
\end{acronym}

	\title{Successive Cancellation Decoding of\\Single Parity-Check Product Codes}

	\author{\IEEEauthorblockN{Mustafa Cemil Co\c{s}kun\IEEEauthorrefmark{1},
	Gianluigi Liva\IEEEauthorrefmark{1},
	Alexandre Graell i Amat\IEEEauthorrefmark{2} and 
	Michael Lentmaier\IEEEauthorrefmark{3}}
	\IEEEauthorblockA{\IEEEauthorrefmark{1} Institute of Communications and Navigation, German Aerospace Center, We{\ss}ling, Germany
	}
	\IEEEauthorblockA{\IEEEauthorrefmark{2}Department of Signals and Systems, Chalmers University of Technology, Gothenburg, Sweden
	}
	\IEEEauthorblockA{\IEEEauthorrefmark{3}Department of Electrical and Information Technology, Lund University, Lund, Sweden
	}}

	\maketitle

	\begin{abstract}
		We introduce successive cancellation (SC) decoding of product codes (PCs) with single parity-check (SPC) component codes. Recursive formulas are derived, which resemble the SC decoding algorithm of polar codes. We analyze the error probability of SPC-PCs over the binary erasure channel under SC decoding. A bridge with the analysis of PCs introduced by Elias in 1954 is also established. Furthermore, bounds on the block error probability under SC decoding are provided, and compared to the bounds under the original decoding algorithm proposed by Elias. It is shown that SC decoding of SPC-PCs achieves a lower block error probability than Elias' decoding.
	\end{abstract}

	\section{Introduction}\label{sec:intro}

In 1954, P. Elias showed that the bit error probability over the \ac{BSC} can be made arbitrarily small with a strictly positive coding rate by iterating an infinite number of simple linear block codes, introducing the class of \acp{PC}\cite{Elias:errorfreecoding54}. More recently, \acp{PC}, re-interpreted as turbo-like codes \cite{Pyndiah98}, and their generalizations (see, e.g., \cite{Tanner,Li04,Feltstrom09,Pfister15}) have attracted a large interest from both a research \cite{rankin1,rankin2,1027786,Chiaraluce} and an application \cite{Berrou05} viewpoint.

In \cite{rankin1}, \acp{PC} with \ac{SPC} component codes were decoded using iterative decoding algorithms based on Bahl, Cocke, Jelinek, and Raviv\cite{BCJR} a-posteriori probability decoding of the component codes. In \cite{rankin2}, the asymptotic performance of \ac{SPC}-\acp{PC}, whose component code length doubles with each dimension, was analyzed over the \ac{BSC}, providing an improved bound on the bit error probability by using $2$-dimensional \ac{SPC}-\acp{PC} as the component codes of the overall \ac{PC}. In \cite{6125399}, a bridge between generalized concatenated codes and polar codes is established.

In this paper,  we establish a bridge between the original decoding algorithm of \acp{PC}, which we refer to as Elias' decoder\footnote{By Elias' decoder, we refer to a decoding algorithm that treats the \ac{PC} as a serially concatenated block code, where the decoding is performed starting from the component codes of the first dimension, up to those of the last dimension, in a one-sweep fashion.} \cite{Elias:errorfreecoding54}, and the \ac{SC} decoding algorithm of polar codes \cite{stolte2002rekursive,arikan2009channel}. The link is established for \ac{SPC}-\acp{PC} and for the \ac{BEC}. We show that the block error probability of \ac{SPC}-\acp{PC} can be upper bounded under both decoding algorithms using the evolution of the erasure probabilities over the decoding graph. As a byproduct of the analysis, it is shown that \ac{SPC}-\acp{PC} do not achieve the capacity of the \ac{BEC} under \ac{SC} decoding. A comparison between Elias' decoding and \ac{SC} decoding of \ac{SPC}-\acp{PC} is provided in terms of block error probability. We prove that \ac{SC} decoding yields a lower error probability than Elias' decoding. Finally, simulation results over the \ac{B-AWGN} channel under \ac{SC} decoding are given for different \ac{SPC}-\acp{PC}.
	\section{Preliminaries}\label{sec:prelim}
In the following, $ x_a^b $ denotes the vector $ (x_a, x_{a+1}, \dots, x_b) $ where $ b \geq a $. We use capital letters for \acp{RV} and lower case letters for their realizations. In addition, we denote a \ac{B-DMC} by $ \W : \mathcal{X} \rightarrow \mathcal{Y} $, with input alphabet $ \mathcal{X} = \{0,1\} $, output alphabet $ \mathcal{Y} $, and transition probabilities $ \W(y|x) $, $ x \in \mathcal{X} $, $ y \in \mathcal{Y} $. We write \ac{BEC}($ \epsilon $) to denote the \ac{BEC} with erasure probability $ \epsilon $. The output alphabet of the \ac{BEC} is $\mathcal{Y} = \{0,1,\texttt{e}\}$, where $\texttt{e}$ denotes an erasure.

The generator matrix of an $(n,k)$ \ac{PC} $\code$ is obtained by \emph{iterating} binary linear block codes $\code_1, \code_2, \ldots, \code_m$ in $ m $ dimensions (levels) \cite{Elias:errorfreecoding54}. 
Let $\mathbf{G}_\ell$ be the generator matrix of the $ \ell $-th component code $\code_\ell$. Then, the generator matrix of the $ m $-dimensional \ac{PC} is $ \mathbf{G}=\mathbf{G}_1 \otimes \mathbf{G}_2 \otimes \ldots \otimes \mathbf{G}_m $, where $\otimes$ is the Kronecker product. Upon proper permutation, the generator matrix will permit to encode the message according to the labeling introduced in the next section.

Let $\code_\ell$ be the $\ell$-th component code with parameters $(n_\ell,k_\ell,d_\ell)$, where $n_\ell$, $k_\ell$, and $ d_\ell $ are the block length, dimension, and minimum distance, respectively. Then, the overall \ac{PC} parameters are
\[
n=\prod_{\ell=1}^m n_\ell, \quad k=\prod_{\ell=1}^m k_\ell, \quad \text{and} \quad d=\prod_{\ell=1}^m d_\ell.
\]
Although the characterization of the complete distance spectrum of a \ac{PC} is still an open problem even for the case where the distance spectrum of its component codes is known, the minimum distance multiplicity is known and equal to $ A_{d}=\prod_{\ell=1}^m A^{(\ell)}_{d_\ell} $, where $ A^{(\ell)}_{d_\ell} $ is that of the $ \ell $-th component code. More on the distance spectrum of \acp{PC} can be found in \cite{1027786,Chiaraluce}. Thanks to the relationship between $ d $ and the minimum distances $ d_\ell $ of the component codes, \acp{PC} tend to have a large minimum distance. However, their minimum distance multiplicities are also typically very high \cite{Liva2010:Product_Chinacom}. Note finally that \ac{SPC}-\acp{PC} are a special class of (left-regular) low-density parity-check codes \cite{ldpc}, defined by a bipartite graph with girth $8$.

	\section{Successive Cancellation Decoding of\\Single Parity-Check Product Codes}

Consider transmission over a \ac{B-DMC} $ \W $ using an $(n,k)$ \ac{SPC}-\ac{PC} with $m$ dimensions (levels). Let the binary vectors $ u_1^k $ and $ x_1^n $ be the message to be encoded and the corresponding codeword, respectively, and let $ y_1^n \in \mathcal{Y}^n$ be the channel observation. The transmission by using the $(9,4)$ \ac{SPC}-\ac{PC}, obtained by iterating $(3,2)$ \ac{SPC} codes, is illustrated in Fig. \ref{fig:transmission}. We label the levels by numbers starting from right to left as it is seen for the $2$-dimensional case in the figure. We denote by $\eta_\ell$ the number of local \ac{SPC} codes at level $\ell$, $1\leq\ell\leq m$, which is computed as
\[\eta_\ell = \prod_{i=1}^{\ell-1}(n_i-1)\prod_{i=\ell+1}^{m}n_i.\]
\begin{figure}[]
	\begin{center}
		\begin{tikzpicture}[every node/.style={inner sep=1pt},scale=0.8, every node/.style={scale=0.8}]
\def\dx{0.9}
\def\dy{0.9}
\def\k{5}
\def\coefoff{0.06}
\pgfmathsetmacro\n{pow(2,\k)-1}
\def\radiusempt{1cm}
\def\radiusfull{0.1}

\draw[-] (5*\dx,0*\dy) -- (8.1*\dx,0*\dy); 
\draw[-] (7*\dx,1*\dy) -- (8.1*\dx,1*\dy);
\draw[-] (5*\dx,2*\dy) -- (8.1*\dx,2*\dy);
\draw[-] (7*\dx,0*\dy) -- (7*\dx,2*\dy);

\draw[-] (5*\dx,3*\dy) -- (8.1*\dx,3*\dy);
\draw[-] (7*\dx,4*\dy) -- (8.1*\dx,4*\dy);
\draw[-] (5*\dx,5*\dy) -- (8.1*\dx,5*\dy);
\draw[-] (7*\dx,3*\dy) -- (7*\dx,5*\dy);

\draw[-] (5*\dx,6*\dy) -- (8.1*\dx,6*\dy);
\draw[-] (7*\dx,7*\dy) -- (8.1*\dx,7*\dy);
\draw[-] (5*\dx,8*\dy) -- (8.1*\dx,8*\dy);
\draw[-] (7*\dx,6*\dy) -- (7*\dx,8*\dy);

\node at (-0.5*\dx,7*\dy) {$u_1$};
\node at (-0.5*\dx,5*\dy) {$u_2$};
\node at (-0.5*\dx,3*\dy) {$u_3$};
\node at (-0.5*\dx,1*\dy) {$u_4$};

\node at (8.6*\dx,8*\dy) {$x_1$};
\node at (8.6*\dx,7*\dy) {$x_3$};
\node at (8.6*\dx,6*\dy) {$x_2$};
\node at (8.6*\dx,5*\dy) {$x_7$};
\node at (8.6*\dx,4*\dy) {$x_9$};
\node at (8.6*\dx,3*\dy) {$x_8$};
\node at (8.6*\dx,2*\dy) {$x_4$};
\node at (8.6*\dx,1*\dy) {$x_6$};
\node at (8.6*\dx,0*\dy) {$x_5$};

\node at (11*\dx,8*\dy) {$y_1$};
\node at (11*\dx,7*\dy) {$y_3$};
\node at (11*\dx,6*\dy) {$y_2$};
\node at (11*\dx,5*\dy) {$y_7$};
\node at (11*\dx,4*\dy) {$y_9$};
\node at (11*\dx,3*\dy) {$y_8$};
\node at (11*\dx,2*\dy) {$y_4$};
\node at (11*\dx,1*\dy) {$y_6$};
\node at (11*\dx,0*\dy) {$y_5$};

\draw[-] (9*\dx,0*\dy) -- (10.5*\dx,0*\dy);
\draw[-] (9*\dx,1*\dy) -- (10.5*\dx,1*\dy);
\draw[-] (9*\dx,2*\dy) -- (10.5*\dx,2*\dy);
\draw[-] (9*\dx,3*\dy) -- (10.5*\dx,3*\dy);
\draw[-] (9*\dx,4*\dy) -- (10.5*\dx,4*\dy);
\draw[-] (9*\dx,5*\dy) -- (10.5*\dx,5*\dy);
\draw[-] (9*\dx,6*\dy) -- (10.5*\dx,6*\dy);
\draw[-] (9*\dx,7*\dy) -- (10.5*\dx,7*\dy);
\draw[-] (9*\dx,8*\dy) -- (10.5*\dx,8*\dy);

\node[rectangle,fill=white,draw,minimum size=0.5cm] at (9.75*\dx,0*\dy) {\footnotesize{$W$}};
\node[rectangle,fill=white,draw,minimum size=0.5cm] at (9.75*\dx,1*\dy) {\footnotesize{$W$}};
\node[rectangle,fill=white,draw,minimum size=0.5cm] at (9.75*\dx,2*\dy) {\footnotesize{$W$}};
\node[rectangle,fill=white,draw,minimum size=0.5cm] at (9.75*\dx,3*\dy) {\footnotesize{$W$}};
\node[rectangle,fill=white,draw,minimum size=0.5cm] at (9.75*\dx,4*\dy) {\footnotesize{$W$}};
\node[rectangle,fill=white,draw,minimum size=0.5cm] at (9.75*\dx,5*\dy) {\footnotesize{$W$}};
\node[rectangle,fill=white,draw,minimum size=0.5cm] at (9.75*\dx,6*\dy) {\footnotesize{$W$}};
\node[rectangle,fill=white,draw,minimum size=0.5cm] at (9.75*\dx,7*\dy) {\footnotesize{$W$}};
\node[rectangle,fill=white,draw,minimum size=0.5cm] at (9.75*\dx,8*\dy) {\footnotesize{$W$}};

\node[circle,fill=white,draw,radius=\radiusfull,minimum size=\radiusfull] at (7*\dx,1*\dy) {\tiny{$+$}};
\node[circle,fill=white,draw,radius=\radiusfull,minimum size=\radiusfull] at (7*\dx,4*\dy) {\tiny{$+$}};
\node[circle,fill=white,draw,radius=\radiusfull,minimum size=\radiusfull] at (7*\dx,7*\dy) {\tiny{$+$}};

\draw[-] (0*\dx,1*\dy) -- (3*\dx,1*\dy);
\draw[-] (2*\dx,2*\dy) -- (3*\dx,2*\dy);
\draw[-] (0*\dx,3*\dy) -- (3*\dx,3*\dy);
\draw[-] (2*\dx,1*\dy) -- (2*\dx,3*\dy);

\draw[-] (0*\dx,5*\dy) -- (3*\dx,5*\dy);
\draw[-] (2*\dx,6*\dy) -- (3*\dx,6*\dy);
\draw[-] (0*\dx,7*\dy) -- (3*\dx,7*\dy);
\draw[-] (2*\dx,5*\dy) -- (2*\dx,7*\dy);

\node[circle,fill=white,draw,radius=\radiusfull,minimum size=\radiusfull] at (2*\dx,2*\dy) {\tiny{$+$}};
\node[circle,fill=white,draw,radius=\radiusfull,minimum size=\radiusfull] at (2*\dx,6*\dy) {\tiny{$+$}};

\draw[-] (3*\dx,7*\dy) -- (5*\dx,8*\dy);
\draw[-] (3*\dx,6*\dy) -- (5*\dx,5*\dy);
\draw[-] (3*\dx,5*\dy) -- (5*\dx,2*\dy);

\draw[-] (3*\dx,3*\dy) -- (5*\dx,6*\dy);
\draw[-] (3*\dx,2*\dy) -- (5*\dx,3*\dy);
\draw[-] (3*\dx,1*\dy) -- (5*\dx,0*\dy);

\draw[rounded corners=1mm,dashed,gray,very thin] (6.0*\dx,-0.35*\dy) rectangle (7.5*\dx,2.35*\dy);
\draw[rounded corners=1mm,dashed,gray,very thin] (6.0*\dx,2.65*\dy) rectangle (7.5*\dx,5.35*\dy);
\draw[rounded corners=1mm,dashed,gray,very thin] (6.0*\dx,5.65*\dy) rectangle (7.5*\dx,8.35*\dy);

\draw[rounded corners=1mm,dashed,gray,very thin] (1.0*\dx,0.65*\dy) rectangle (2.5*\dx,3.35*\dy);
\draw[rounded corners=1mm,dashed,gray,very thin] (1.0*\dx,4.65*\dy) rectangle (2.5*\dx,7.35*\dy);

\draw[rounded corners=1mm,dashed,gray,very thin] (0.25*\dx,-0.5*\dy) rectangle (7.9*\dx,8.5*\dy);

\node at (3.95*\dx,9.15*\dy) {\footnotesize{$(9,4)$ SPC product code encoder}};
\draw[->] (3.95*\dx,9.0*\dy) -- (3.95*\dx,8.52*\dy);

\node at (2.25*\dx,8*\dy) {\scriptsize{$(3,2)$ SPC local code encoder}};
\draw[->] (1.85*\dx,7.85*\dy) -- (1.85*\dx,7.37*\dy);

\node at (2*\dx,-1*\dy) {$2$nd level};
\node at (7*\dx,-1*\dy) {$1$st level};

\end{tikzpicture}
	\end{center}
	\caption{Transmission by using the $(9,4)$ \ac{SPC}-\ac{PC}.}
	\label{fig:transmission}
\end{figure}
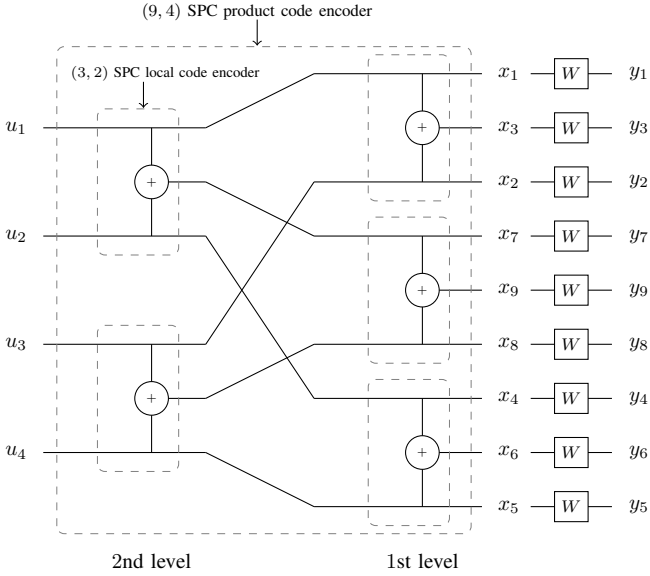

\ac{SC} decoding follows the schedule introduced in \cite{arikan2009channel} for polar codes. Explicitly, the decision on the $i$-th information bit is made according to the soft-information obtained by performing a message-passing algorithm which propagates messages from the right of the decoding graph, along the edges of the tree rooted in $u_i$, where the operations at the local codes take into account the past decisions $\hat{u}_1^{i-1}$. The decoding graph for the $(9,4)$ \ac{SPC}-\ac{PC} is provided in Fig. \ref{fig:decoding_graph}, by introducing \acp{CN} and \acp{VN} to the encoder graph given in Fig. \ref{fig:transmission}. We denote the soft-messages coming from the right, associated with the $i$-th codeword bit of the $j$-th local code at level $\ell$, as $\rho_{j,i}^{(\ell)}$, $1\leq i \leq n_\ell$, $1\leq j\leq \eta_\ell$. The inputs to the decoder are defined as the channel \acp{LLR}, i.e.,
\[\rho_{j,i}^{(1)} \triangleq \ln\bigg[\frac{\W(y_{(j-1)n_1+i}|0)}{\W(y_{(j-1)n_1+i}|1)}\bigg]\]
$1\leq j\leq \eta_1$ and $1\leq i\leq n_1$.

\begin{figure}[]
	\begin{center}
		\begin{tikzpicture}[every node/.style={inner sep=1pt},scale=0.8, every node/.style={scale=0.8}]
\def\dx{0.9}
\def\dy{0.9}
\def\k{5}
\def\coefoff{0.06}
\pgfmathsetmacro\n{pow(2,\k)-1}
\def\radiusempt{1cm}
\def\radiusfull{0.5cm}

\draw[-] (7*\dx,0*\dy) -- (7*\dx,2*\dy);
\draw[-] (7*\dx,3*\dy) -- (7*\dx,5*\dy);
\draw[-] (7*\dx,6*\dy) -- (7*\dx,8*\dy);

\draw[-] (2*\dx,5*\dy) -- (2*\dx,7*\dy);
\draw[-] (2*\dx,1*\dy) -- (2*\dx,3*\dy);

\node[circle,fill=white,draw,radius=\radiusfull,minimum size=\radiusfull] (16=) at (7*\dx,0*\dy) {\footnotesize{$=$}};
\node[circle,fill=white,draw,radius=\radiusfull,minimum size=\radiusfull] (15=) at (7*\dx,2*\dy) {\footnotesize{$=$}};
\node[circle,fill=white,draw,radius=\radiusfull,minimum size=\radiusfull] (14=) at (7*\dx,3*\dy) {\footnotesize{$=$}};
\node[circle,fill=white,draw,radius=\radiusfull,minimum size=\radiusfull] (13=) at (7*\dx,5*\dy) {\footnotesize{$=$}};
\node[circle,fill=white,draw,radius=\radiusfull,minimum size=\radiusfull] (12=) at (7*\dx,6*\dy) {\footnotesize{$=$}};
\node[circle,fill=white,draw,radius=\radiusfull,minimum size=\radiusfull] (11=) at (7*\dx,8*\dy) {\footnotesize{$=$}};

\node[circle,fill=white,draw,radius=\radiusfull,minimum size=\radiusfull] (24=) at (2*\dx,1*\dy) {\footnotesize{$=$}};
\node[circle,fill=white,draw,radius=\radiusfull,minimum size=\radiusfull] (23=) at (2*\dx,3*\dy) {\footnotesize{$=$}};
\node[circle,fill=white,draw,radius=\radiusfull,minimum size=\radiusfull] (22=) at (2*\dx,5*\dy) {\footnotesize{$=$}};
\node[circle,fill=white,draw,radius=\radiusfull,minimum size=\radiusfull] (21=) at (2*\dx,7*\dy) {\footnotesize{$=$}};

\node[circle,fill=white,draw,radius=\radiusfull,minimum size=\radiusfull] (13+) at (7*\dx,1*\dy) {\tiny{$+$}};
\node[circle,fill=white,draw,radius=\radiusfull,minimum size=\radiusfull] (12+) at (7*\dx,4*\dy) {\tiny{$+$}};
\node[circle,fill=white,draw,radius=\radiusfull,minimum size=\radiusfull] (11+) at (7*\dx,7*\dy) {\tiny{$+$}};
\node[circle,fill=white,draw,radius=\radiusfull,minimum size=\radiusfull] (22+) at (2*\dx,2*\dy) {\tiny{$+$}};
\node[circle,fill=white,draw,radius=\radiusfull,minimum size=\radiusfull] (21+) at (2*\dx,6*\dy) {\tiny{$+$}};

\draw[-] (5*\dx,0*\dy) -- (16=);
\draw[-latex] (8*\dx,0*\dy) -- (16=);

\draw[-latex] (8*\dx,1*\dy) -- (13+);
\draw[-] (5*\dx,2*\dy) -- (15=);
\draw[-latex] (8*\dx,2*\dy) -- (15=);

\draw[-] (5*\dx,3*\dy) -- (14=);
\draw[-latex] (8*\dx,3*\dy) -- (14=);

\draw[-latex] (8*\dx,4*\dy) -- (12+);
\draw[-] (5*\dx,5*\dy) -- (13=);
\draw[-latex] (8*\dx,5*\dy) -- (13=);

\draw[-] (5*\dx,6*\dy) -- (12=);
\draw[-latex] (8*\dx,6*\dy) -- (12=);

\draw[-latex] (8*\dx,7*\dy) -- (11+);
\draw[-] (5*\dx,8*\dy) -- (11=);
\draw[-latex] (8*\dx,8*\dy) -- (11=);

\node at (10*\dx,8*\dy) {$\rho_{1,1}^{(1)} \triangleq \ln\Big[\frac{\W(y_1|x_1=0)}{\W(y_1|x_1=1)}\Big]$};
\node at (10*\dx,7*\dy) {$\rho_{1,3}^{(1)} \triangleq \ln\Big[\frac{\W(y_3|x_3=0)}{\W(y_3|x_3=1)}\Big]$};
\node at (10*\dx,6*\dy) {$\rho_{1,2}^{(1)} \triangleq \ln\Big[\frac{\W(y_2|x_2=0)}{\W(y_2|x_2=1)}\Big]$};
\node at (10*\dx,5*\dy) {$\rho_{3,1}^{(1)} \triangleq \ln\Big[\frac{\W(y_7|x_7=0)}{\W(y_7|x_7=1)}\Big]$};
\node at (10*\dx,4*\dy) {$\rho_{3,3}^{(1)} \triangleq \ln\Big[\frac{\W(y_9|x_9=0)}{\W(y_9|x_9=1)}\Big]$};
\node at (10*\dx,3*\dy) {$\rho_{3,2}^{(1)} \triangleq \ln\Big[\frac{\W(y_8|x_8=0)}{\W(y_8|x_8=1)}\Big]$};
\node at (10*\dx,2*\dy) {$\rho_{2,1}^{(1)} \triangleq \ln\Big[\frac{\W(y_4|x_4=0)}{\W(y_4|x_4=1)}\Big]$};
\node at (10*\dx,1*\dy) {$\rho_{2,3}^{(1)} \triangleq \ln\Big[\frac{\W(y_6|x_6=0)}{\W(y_6|x_6=1)}\Big]$};
\node at (10*\dx,0*\dy) {$\rho_{2,2}^{(1)} \triangleq \ln\Big[\frac{\W(y_5|x_5=0)}{\W(y_5|x_5=1)}\Big]$};

\node at (7*\dx,-1*\dy) {$1$st level};

\node at (2*\dx,-1*\dy) {$2$nd level};

\draw[-] (1*\dx,1*\dy) -- (24=);
\draw[-] (3*\dx,1*\dy) -- (24=);
\draw[-] (3*\dx,2*\dy) -- (22+);
\draw[-] (1*\dx,3*\dy) -- (23=);
\draw[-] (3*\dx,3*\dy) -- (23=);

\draw[-] (1*\dx,5*\dy) -- (22=);
\draw[-] (3*\dx,5*\dy) -- (22=);
\draw[-] (3*\dx,6*\dy) -- (21+);
\draw[-] (1*\dx,7*\dy) -- (21=);
\draw[-] (3*\dx,7*\dy) -- (21=);

\draw[-] (3*\dx,7*\dy) -- (5*\dx,8*\dy);
\draw[-] (3*\dx,6*\dy) -- (5*\dx,5*\dy);
\draw[-] (3*\dx,5*\dy) -- (5*\dx,2*\dy);

\draw[-] (3*\dx,3*\dy) -- (5*\dx,6*\dy);
\draw[-] (3*\dx,2*\dy) -- (5*\dx,3*\dy);
\draw[-] (3*\dx,1*\dy) -- (5*\dx,0*\dy);

\end{tikzpicture}
	\end{center}
	\caption{Decoding graph for the $(9,4)$ \ac{SPC}-\ac{PC}. \mytikzVN\, denotes a \ac{VN} and \mytikzCN\, denotes a \ac{CN}.}
	\label{fig:decoding_graph}
\end{figure}
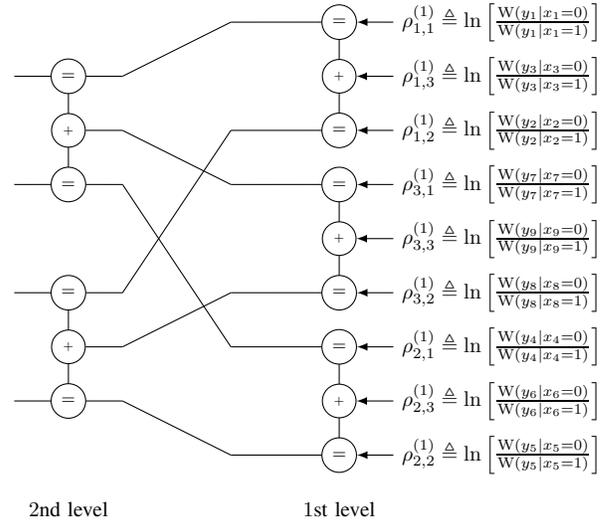

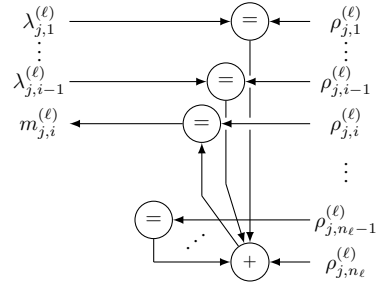
\begin{figure}
	\centering
		\begin{tikzpicture}[every node/.style={inner sep=1pt},scale=0.8, every node/.style={scale=0.8}]
	
	\node[circle,fill=white,draw,radius=0.5,minimum size=0.5] (eq1) at (2,4) {$=$};
	\node[circle,fill=white,draw,radius=0.5,minimum size=0.5] (eq2) at (1.6,3) {$=$};
	\node[circle,fill=white,draw,radius=0.5,minimum size=0.5] (eq3) at (1.2,2.3) {$=$};
	\node[circle,fill=white,draw,radius=0.5,minimum size=0.5] (eq4) at (0.4,0.7) {$=$};
	\node[circle,fill=white,draw,radius=0.5,minimum size=0.5] (sum) at (2,0) {\footnotesize{$+$}};
	\draw[-latex] (3,4) -- (eq1);
	\draw[-latex] (3,3) -- (eq2);
	\draw[-latex] (3,2.3) -- (eq3);
	\draw[-latex] (3,0.7) -- (eq4);			
	\draw[-latex] (3,0) -- (sum);
	
	\draw[-] (eq1) -- (2,3.1);\draw[-] (2,2.9) -- (2,2.4);\draw[-latex] (2,2.2) -- (sum);
	\draw[-] (eq2) -- (1.6,2.4);\draw[-] (1.6,2.2) -- (1.6,1.3);\draw[-latex] (1.6,1.3) -- (sum);
	\draw[-latex] (1.2,1.1) -- (eq3);\draw[-] (1.2,1.1) -- (sum);
	
	\draw[-] (eq4) -- (0.4,0);\draw[-latex] (0.4,0) -- (sum);
	
	\draw[-latex] (-1,4) -- (eq1);
	\draw[-latex] (-1,3) -- (eq2);
	\draw[-latex] (eq3) -- (-1,2.3);
	
	\node at (-1.5,4) {$\lambda_{j,1}^{(\ell)}$};
	\node at (-1.5,3.6) {\vdots};
	\node at (-1.5,3) {$\lambda_{j,i-1}^{(\ell)}$};
	\node at (-1.5,2.3) {$m_{j,i}^{(\ell)}$};
	\node at (3.6,4) {$\rho_{j,1}^{(\ell)}$};
	\node at (3.6,3.6) {\vdots};
	\node at (3.6,3) {$\rho_{j,i-1}^{(\ell)}$};
	\node at (3.6,2.3) {$\rho_{j,i}^{(\ell)}$};
	\node at (3.6,1.6) {\vdots};
	\node at (3.6,0.7) {$\rho_{j,n_\ell-1}^{(\ell)}$};
	\node at (3.6,0) {$\rho_{j,n_\ell}^{(\ell)}$};
	
	\node at (1.1,0.45) {\reflectbox{$\ddots$}};
	
	\end{tikzpicture}
	\caption{The decoding graph of the $j$-th local code at level $\ell$.}
	\label{fig:local_graph}
\end{figure}

Consider the $j$-th local code at level $\ell$, whose decoding graph is provided in Fig. \ref{fig:local_graph}. The soft output message for the $i$-th local information bit is computed as
\begin{equation}
	m_{j,i}^{(\ell)} = \rho_{j,i}^{(\ell)} + 2\atanh\Bigg[\prod_{i^\prime=i+1}^{n_\ell}\tanh\bigg(\frac{\rho_{j,i^\prime}^{(\ell)}}{2}\bigg)\Bigg]\Big(-1\Big)^{\sum_{z=1}^{i-1}\lambda_{j,z}^{(\ell)}} \label{eq:metric_calc}
\end{equation}
where $\lambda_{j,z}^{(\ell)}$ is the hard input (i.e., bit) message for the $z$-th local information bit, coming from the left, with $z = 1,\dots,i-1$, depending on the past decisions. The computed output message $m_{j,i}^{(\ell)}$ is propagated leftwards over the tree edge, providing the next level with an input message. In particular, we set
\begin{equation}
	\rho_{j^\prime,i^\prime}^{(\ell+1)} \leftarrow m_{j,i}^{(\ell)}
	\label{eq:permutation}
\end{equation}
where the assignment is made according to the graph connections, i.e., the \ac{PC} structure. The decision is made as

\begin{equation}
\hat{u}_{(j-1)(n_m-1)+i} = \left\{\begin{array}{lll}
0 &\text{if}& m_{j,i}^{(m)} \geq 0 \\
1 &\text{if}& m_{j,i}^{(m)} < 0
\end{array}
\right.
\label{eq:decision}
\end{equation}
by breaking the ties in favor of zero. Over the \ac{BEC}, ties are not broken towards any decision by revising \eqref{eq:decision} as
\begin{equation}
\hat{u}_{(j-1)(n_m-1)+i} = \left\{\begin{array}{lll}
0 &\text{if}& m_{j,i}^{(m)} = +\infty \\
\texttt{e} & \text{if}& m_{j,i}^{(m)} = 0 \\
1 &\text{if}& m_{j,i}^{(m)} = -\infty.
\end{array}
\right.
\label{eq:decision_BEC}
\end{equation}
Accordingly, over the \ac{BEC}, \eqref{eq:metric_calc} is valid if $\lambda_{j,z}^{(\ell)}\neq\texttt{e}$ for all $z = 1,\dots,i-1$. However, if there exists any $z = 1,\dots,i-1$, such that $\lambda_{j,z}^{(\ell)} = \texttt{e}$, then \eqref{eq:metric_calc} has to be replaced by
\begin{equation}
	m_{j,i}^{(\ell)} = \rho_{j,i}^{(\ell)}.
	\label{eq:metric_calc_BEC}
\end{equation}
A block error event occurs if $ \hat{u}_1^k \neq u_1^k $.
\begin{example}
	Consider the $2$-dimensional $(9,4)$ \ac{SPC}-\ac{PC} obtained by iterating $(3,2)$ \ac{SPC} codes. Its decoding graph is provided in Fig. \ref{fig:decoding_graph}. The number of local codes at levels $1$ and $2$ are computed respectively as $ \eta_1 = 3 $ and $\eta_2 = 2$. We illustrate an intermediate \ac{SC} decoding step in Fig. \ref{fig:intermediate}, where the decisions for the first two information bits are already made and the decoder computes the soft message for the third bit. At level $1$, the $j$-th local code has the hard input message $\lambda_{j,1}^{(1)}$ and the soft input messages $\left(\rho_{j,1}^{(1)},\rho_{j,2}^{(1)},\rho_{j,3}^{(1)}\right)$. Using \eqref{eq:metric_calc} or \eqref{eq:metric_calc_BEC} depending on the previous decisions, it computes the soft output message $m_{j,2}^{(1)}$, providing the next level with a soft input message. According to the connections in the graph, we have the following assignments:
	\[\rho_{2,1}^{(2)}\leftarrow m_{1,2}^{(1)},\quad\rho_{2,2}^{(2)}\leftarrow m_{2,2}^{(1)},\quad\text{and}\quad\rho_{2,3}^{(2)}\leftarrow m_{3,2}^{(1)}.\]
	Then, the soft output message $m_{2,1}^{(2)}$ is computed with the soft input messages coming from the right. The final decision is made for $\hat{u}_3$ as in \eqref{eq:decision} or \eqref{eq:decision_BEC} depending on the channel.
	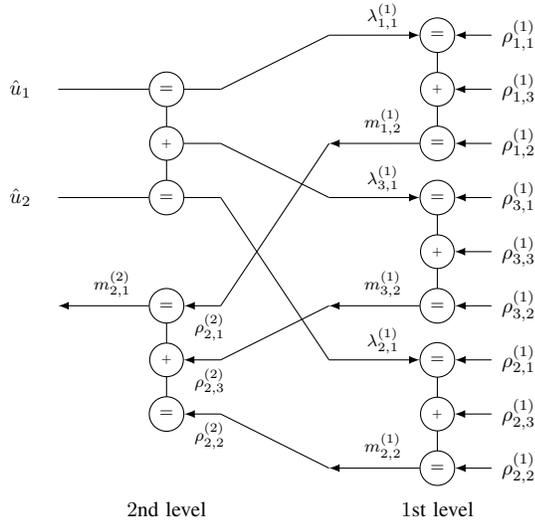
\begin{figure}
		\centering
			\begin{tikzpicture}[every node/.style={inner sep=1pt},scale=0.8, every node/.style={scale=0.8}]
\def\dx{0.9}
\def\dy{0.9}
\def\k{5}
\def\coefoff{0.06}
\pgfmathsetmacro\n{pow(2,\k)-1}
\def\radiusempt{1cm}
\def\radiusfull{0.5cm}

\draw[-] (7*\dx,0*\dy) -- (7*\dx,2*\dy);
\draw[-] (7*\dx,3*\dy) -- (7*\dx,5*\dy);
\draw[-] (7*\dx,6*\dy) -- (7*\dx,8*\dy);

\draw[-] (2*\dx,5*\dy) -- (2*\dx,7*\dy);
\draw[-] (2*\dx,1*\dy) -- (2*\dx,3*\dy);

\node[circle,fill=white,draw,radius=\radiusfull,minimum size=\radiusfull] (16=) at (7*\dx,0*\dy) {\footnotesize{$=$}};
\node[circle,fill=white,draw,radius=\radiusfull,minimum size=\radiusfull] (15=) at (7*\dx,2*\dy) {\footnotesize{$=$}};
\node[circle,fill=white,draw,radius=\radiusfull,minimum size=\radiusfull] (14=) at (7*\dx,3*\dy) {\footnotesize{$=$}};
\node[circle,fill=white,draw,radius=\radiusfull,minimum size=\radiusfull] (13=) at (7*\dx,5*\dy) {\footnotesize{$=$}};
\node[circle,fill=white,draw,radius=\radiusfull,minimum size=\radiusfull] (12=) at (7*\dx,6*\dy) {\footnotesize{$=$}};
\node[circle,fill=white,draw,radius=\radiusfull,minimum size=\radiusfull] (11=) at (7*\dx,8*\dy) {\footnotesize{$=$}};

\node[circle,fill=white,draw,radius=\radiusfull,minimum size=\radiusfull] (24=) at (2*\dx,1*\dy) {\footnotesize{$=$}};
\node[circle,fill=white,draw,radius=\radiusfull,minimum size=\radiusfull] (23=) at (2*\dx,3*\dy) {\footnotesize{$=$}};
\node[circle,fill=white,draw,radius=\radiusfull,minimum size=\radiusfull] (22=) at (2*\dx,5*\dy) {\footnotesize{$=$}};
\node[circle,fill=white,draw,radius=\radiusfull,minimum size=\radiusfull] (21=) at (2*\dx,7*\dy) {\footnotesize{$=$}};

\node[circle,fill=white,draw,radius=\radiusfull,minimum size=\radiusfull] (13+) at (7*\dx,1*\dy) {\tiny{$+$}};
\node[circle,fill=white,draw,radius=\radiusfull,minimum size=\radiusfull] (12+) at (7*\dx,4*\dy) {\tiny{$+$}};
\node[circle,fill=white,draw,radius=\radiusfull,minimum size=\radiusfull] (11+) at (7*\dx,7*\dy) {\tiny{$+$}};
\node[circle,fill=white,draw,radius=\radiusfull,minimum size=\radiusfull] (22+) at (2*\dx,2*\dy) {\tiny{$+$}};
\node[circle,fill=white,draw,radius=\radiusfull,minimum size=\radiusfull] (21+) at (2*\dx,6*\dy) {\tiny{$+$}};

\draw[-latex] (16=) -- (5*\dx,0*\dy);
\draw[-latex] (8*\dx,0*\dy) -- (16=);

\draw[-latex] (8*\dx,1*\dy) -- (13+);
\draw[-latex] (5*\dx,2*\dy) -- (15=);
\draw[-latex] (8*\dx,2*\dy) -- (15=);

\draw[-latex] (14=) -- (5*\dx,3*\dy);
\draw[-latex] (8*\dx,3*\dy) -- (14=);

\draw[-latex] (8*\dx,4*\dy) -- (12+);
\draw[-latex] (5*\dx,5*\dy) -- (13=);
\draw[-latex] (8*\dx,5*\dy) -- (13=);

\draw[-latex] (12=) -- (5*\dx,6*\dy);
\draw[-latex] (8*\dx,6*\dy) -- (12=);

\draw[-latex] (8*\dx,7*\dy) -- (11+);
\draw[-latex] (5*\dx,8*\dy) -- (11=);
\draw[-latex] (8*\dx,8*\dy) -- (11=);

\node at (-0.7*\dx,7*\dy) {$\hat{u}_1$};
\node at (-0.7*\dx,5*\dy) {$\hat{u}_2$};

\node at (7*\dx,-0.75*\dy) {$1$st level};

\node at (2*\dx,-0.75*\dy) {$2$nd level};

\node at (8.5*\dx,8*\dy) {$\rho_{1,1}^{(1)}$};
\node at (8.5*\dx,7*\dy) {$\rho_{1,3}^{(1)}$};
\node at (8.5*\dx,6*\dy) {$\rho_{1,2}^{(1)}$};
\node at (8.5*\dx,5*\dy) {$\rho_{3,1}^{(1)}$};
\node at (8.5*\dx,4*\dy) {$\rho_{3,3}^{(1)}$};
\node at (8.5*\dx,3*\dy) {$\rho_{3,2}^{(1)}$};
\node at (8.5*\dx,2*\dy) {$\rho_{2,1}^{(1)}$};
\node at (8.5*\dx,1*\dy) {$\rho_{2,3}^{(1)}$};
\node at (8.5*\dx,0*\dy) {$\rho_{2,2}^{(1)}$};

\node at (6*\dx,6.4*\dy) {\footnotesize{$m_{1,2}^{(1)}$}};
\node at (6*\dx,3.4*\dy) {\footnotesize{$m_{3,2}^{(1)}$}};
\node at (6*\dx,0.4*\dy){\footnotesize{$m_{2,2}^{(1)}$}};

\node at (1*\dx,3.4*\dy) {\footnotesize{$m_{2,1}^{(2)}$}};

\node at (2.8*\dx,2.6*\dy) {\footnotesize{$\rho_{2,1}^{(2)}$}};
\node at (2.8*\dx,1.6*\dy) {\footnotesize{$\rho_{2,3}^{(2)}$}};
\node at (2.8*\dx,0.6*\dy) {\footnotesize{$\rho_{2,2}^{(2)}$}};

\node at (6*\dx,8.35*\dy) {\footnotesize{$\lambda_{1,1}^{(1)}$}};
\node at (6*\dx,5.35*\dy) {\footnotesize{$\lambda_{3,1}^{(1)}$}};
\node at (6*\dx,2.35*\dy) {\footnotesize{$\lambda_{2,1}^{(1)}$}};

\draw[-latex] (3*\dx,1*\dy) -- (24=);
\draw[-latex] (3*\dx,2*\dy) -- (22+);
\draw[-latex] (23=) -- (0*\dx,3*\dy);
\draw[-latex] (3*\dx,3*\dy) -- (23=);

\draw[-] (0*\dx,5*\dy) -- (22=);
\draw[-] (3*\dx,5*\dy) -- (22=);
\draw[-] (3*\dx,6*\dy) -- (21+);
\draw[-] (0*\dx,7*\dy) -- (21=);
\draw[-] (3*\dx,7*\dy) -- (21=);

\draw[-] (3*\dx,7*\dy) -- (5*\dx,8*\dy);
\draw[-] (3*\dx,6*\dy) -- (5*\dx,5*\dy);
\draw[-] (3*\dx,5*\dy) -- (5*\dx,2*\dy);

\draw[-] (3*\dx,3*\dy) -- (5*\dx,6*\dy);
\draw[-] (3*\dx,2*\dy) -- (5*\dx,3*\dy);
\draw[-] (3*\dx,1*\dy) -- (5*\dx,0*\dy);

\end{tikzpicture}
		\caption{Decoding graph for the third information bit $u_3$.}
		\label{fig:intermediate}
	\end{figure}
\end{example}
We revise \eqref{eq:metric_calc} under Elias' decoder as
\begin{equation}
m_{j,i}^{(\ell)} = \rho_{j,i}^{(\ell)} + 2\atanh\Bigg[\prod_{i^\prime\neq i,i^\prime = 1}^{n_\ell}\tanh\bigg(\frac{\rho_{j,i^\prime}^{(j)}}{2}\bigg)\Bigg].
\label{eq:metric_calc_ED}
\end{equation}
\subsection{Analysis over the Binary Erasure Channel}

We first analyze the behavior of a local $(n_\ell,n_\ell-1)$ \ac{SPC} (component) code under \ac{SC} decoding over the \ac{BEC}($\epsilon$) $ \W $. We denote by $\epsilon^{(i)}$ the erasure probability of the $i$-th information bit after \ac{SC} decoding conditioned on the correct decoding of the $i-1$ preceding bits, with $i=1,\ldots , n_\ell-1$. The relationship between the input-output erasure probabilities is given by
\begin{equation}
	\epsilon^{(i)}=\epsilon\left(1-\left(1-\epsilon\right)^{n_\ell-i}\right)\qquad \text{for}\quad i=1,\dots,n_\ell-1 
	\label{eq:erasure_SPC}.
\end{equation}
Denote the bits at the input of such a local SPC code encoder as $v_1,\dots,v_{n_\ell-1}$ and the received values at the input of the corresponding local SPC decoder as $b_1,\dots,b_{n_\ell}$. Let $\mi^{(i)}$ denote the mutual information between the \acp{RV} $V_i$ and $(B_1^{n_\ell},V_1^{i-1})$, i.e., $\mi^{(i)} \triangleq \mi(V_i;B_1^{n_\ell},V_1^{i-1})$. Then, the recursion \eqref{eq:erasure_SPC} can be rewritten in terms of mutual information as
\begin{equation}
	\mi^{(i)}=1-\left(1-\mi\right)\left(1-\mi^{n_\ell-i}\right)\qquad \text{for}\quad i=1,\ldots, n_\ell-1 
	\label{eq:mi_SPC}
\end{equation}
with $\mi\triangleq 1-\epsilon$  and $\mi^{(i)}\triangleq 1-\epsilon^{(i)}$.
\begin{prop}\label{prop:mi_loss}
	The mutual information at the input of an \ac{SPC} local decoder in the $\ell$-th dimension is not preserved at its output, i.e.,
	\[
		\sum_{j=1}^{n_\ell-1}\mi^{(j)}<n_\ell \mi.
	\] 
\end{prop}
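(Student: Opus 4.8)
The plan is to reduce the claimed strict inequality to an explicit closed-form evaluation of the sum, which turns out to telescope into a finite geometric series. First I would rewrite the per-bit recursion \eqref{eq:mi_SPC} in additive form by expanding the product,
\[
\mi^{(i)} = 1 - (1-\mi)\bigl(1-\mi^{n_\ell-i}\bigr) = \mi + (1-\mi)\,\mi^{\,n_\ell-i}.
\]
Written this way, each term is the ``baseline'' value $\mi$ plus a nonnegative correction $(1-\mi)\mi^{\,n_\ell-i}$, so the deficit of the total from $n_\ell\mi$ is already visible as coming from dropping $\mi$ (the first index never contributes a full extra term) and from the exponential tail.

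Next I would sum over $i=1,\dots,n_\ell-1$, separating the constant and geometric contributions,
\[
\sum_{i=1}^{n_\ell-1}\mi^{(i)} = (n_\ell-1)\mi + (1-\mi)\sum_{i=1}^{n_\ell-1}\mi^{\,n_\ell-i}.
\]
Re-indexing $j=n_\ell-i$ turns the remaining sum into $\sum_{j=1}^{n_\ell-1}\mi^{\,j} = \mi\bigl(1-\mi^{\,n_\ell-1}\bigr)/(1-\mi)$, so the prefactor $(1-\mi)$ cancels and the expression collapses to
\[
\sum_{i=1}^{n_\ell-1}\mi^{(i)} = (n_\ell-1)\mi + \mi\bigl(1-\mi^{\,n_\ell-1}\bigr) = n_\ell\,\mi - \mi^{\,n_\ell}.
\]

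To conclude, I would observe that for a genuine channel with $\epsilon<1$ we have $\mi = 1-\epsilon > 0$, hence $\mi^{\,n_\ell}>0$, which gives $\sum_{j=1}^{n_\ell-1}\mi^{(j)} = n_\ell\mi - \mi^{\,n_\ell} < n_\ell\mi$, exactly the claim; moreover this identifies the mutual-information deficit precisely as $\mi^{\,n_\ell}$. There is essentially no analytic obstacle here: the argument is a direct computation once the recursion is put in additive form, and the geometric sum is the only nontrivial step. The single point to watch is the degenerate boundary $\mi=0$ (the useless channel $\epsilon=1$), where both sides vanish and the inequality degrades to equality, so the statement tacitly presumes $\epsilon<1$; I would state this assumption explicitly.
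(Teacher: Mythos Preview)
Your proof is correct and follows essentially the same route as the paper: both expand the recursion \eqref{eq:mi_SPC}, sum the resulting geometric series, and arrive at the closed form $\sum_{j=1}^{n_\ell-1}\mi^{(j)} = n_\ell\mi - \mi^{\,n_\ell}$, from which the strict inequality is immediate. Your explicit remark that the degenerate case $\mi=0$ yields equality and should be excluded is a useful addition not stated in the paper.
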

\begin{proof}
	We have that
	\begin{align*}
		\sum_{j=1}^{n_\ell-1}\mi^{(j)} &= \sum_{j=1}^{n_\ell-1} \left[1-\left(1-\mi\right)\left(1-\mi^{n_\ell-j}\right)\right] \\
		&=(n_\ell-1)-\left(1-\mi\right)\sum_{j=1}^{n_\ell-1}\left(1-\mi^{n_\ell-j}\right)\\
		&=(n_\ell-1)\mi + \left(1-\mi\right)\mi \frac{1-\mi^{n_\ell-1}}{1-\mi}\\[3\jot]
		&=n_\ell\mi  - \mi^{n_\ell} < n_\ell\mi. &&\qedhere
	\end{align*}
\end{proof}
Proposition \ref{prop:mi_loss} provides also the loss of mutual information due to a local \ac{SPC} code in the $\ell$-th dimension, which is $\mi^{n_\ell}$. By recursively applying the transformation \eqref{eq:erasure_SPC}, one can derive the erasure probability associated with the information bits of an \ac{SPC}-\ac{PC}. Denote such erasure probabilities as $q_i$, $i=1,\ldots,k$. The evolution of the corresponding mutual information values at each transformation level is illustrated in Fig. \ref{fig:PEVO}, where $ \mi = 0.3 $, i.e., $ \epsilon = 0.7 $, for the original channel.

\begin{figure}[]
	\begin{center}
		\newcommand{\length}{9}
\begin{tikzpicture}[every node/.style={inner sep=1pt},scale=0.9, every node/.style={transform shape},font=\footnotesize]
\def\dx{0.68}
\def\dy{6}
\draw[-latex,black] (1*\dx,0*\dy) -- (1*\dx,1.2*\dy);
\draw[-latex,black] (1*\dx,0*\dy) -- (12*\dx,0*\dy);
\draw[gray,dashed] (1*\dx,1*\dy) -- (11*\dx,1*\dy);
\node[rotate=90] at (-0.3*\dx,0.5*\dy) {{Mutual Information}};
\node at (0.7*\dx,1*\dy) {{$1$}};
\node at (0.4*\dx,0.3*\dy) {{$0.3$}};
\node at (1*\dx,-0.06*\dy) {{$0$}};
\node at (6*\dx,-0.11*\dy) {{level}};

\foreach \n in {1,...,\length}{
\pgfplotstableread{level\n.txt}{\tableA}
\pgfplotstablegetrowsof{level\n.txt}
\pgfmathsetmacro{\rowsA}{\pgfplotsretval-1}

\node at (\n*\dx+\dx,-0.06*\dy) {\footnotesize{$\n$}};
\draw[black]  (\n*\dx+\dx,0*\dy) -- (\n*\dx+\dx,0.02*\dy);

\foreach \i in {0,...,\rowsA}{
	\pgfplotstablegetelem{\i}{[index] 0}\of{\tableA}
		\let\xs\pgfplotsretval
	\pgfplotstablegetelem{\i}{[index] 1}\of{\tableA}
		\let\Gs\pgfplotsretval
	\pgfplotstablegetelem{\i}{[index] 2}\of{\tableA}
		\let\Bs\pgfplotsretval
	\pgfplotstablegetelem{\i}{[index] 3}\of{\tableA}
		\let\xe\pgfplotsretval
	\pgfplotstablegetelem{\i}{[index] 4}\of{\tableA}
		\let\Ge\pgfplotsretval
	\pgfplotstablegetelem{\i}{[index] 5}\of{\tableA}
		\let\Be\pgfplotsretval
	\draw[ultra thin,black!90,dashed] (\dx*\xs,\dy*\Bs) -- (\dx*\xe,\dy*\Be);
	\draw[ultra thin,black!90] (\dx*\xs,\dy*\Gs) -- (\dx*\xe,\dy*\Ge);	
}

}\end{tikzpicture}
	\end{center}
	\caption{Mutual information evolution via $(3,2)$ \ac{SPC} codes. At each level, two mutual information values are computed from a root by using \eqref{eq:mi_SPC}. The dashed line segment shows the evolution of the mutual information corresponding to the higher erasure probability while the solid one corresponding to the lower.}\label{fig:PEVO}
\end{figure}
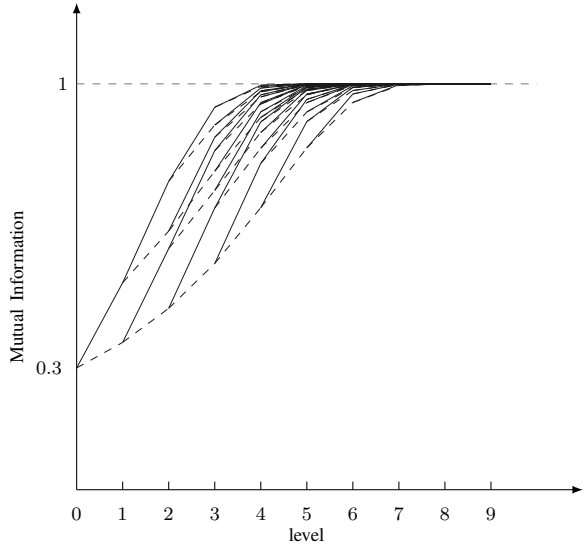

The largest bit erasure probability is equal to that of the first decoded information bit, i.e., $ q_{\mathrm{max}}\triangleq \max_{i=1,\ldots,k} q_i = q_1 $. The block error probability under \ac{SC} decoding, denoted by $P_{\SC}$, is bounded as \cite{arikan2009channel}
\begin{equation}
	q_{\mathrm{max}}\leq P_{\SC} \leq \sum_{i=1}^k q_i \label{eq:UB_Product}.
\end{equation}
A looser upper bound can be obtained by tracking only the largest erasure probability as
\begin{equation}
	P_{\SC} \leq kq_{\mathrm{max}}. \label{eq:LooseUB_Product}
\end{equation}
Note that the derivation of $q_{\mathrm{max}}$ is obtained by iterating the transformation for the first decoded information bit only, i.e.,
\begin{equation}
	\epsilon_\ell^{(1)}=\epsilon_{\ell-1}^{(1)}\left(1-\left(1-\epsilon_{\ell-1}^{(1)}\right)^{n_\ell-1}\right) \quad \text{for}\quad \ell\geq 1 \label{eq:Worst_Channel_Arikan}
\end{equation}
where $\ell$ is the transformation level, and $ \epsilon_0^{(1)} = \epsilon $. For an $ m $-dimensional \ac{SPC}-\ac{PC}, the erasure probability of the first decoded information bit is
\begin{equation}
	q_{\mathrm{max}}= \epsilon_m^{(1)}. \label{eq:Worst_Channel_Arikan_at_last_stage}
\end{equation}
Remarkably, \eqref{eq:Worst_Channel_Arikan} describes also the evolution of the bit erasure probability under bounded distance decoding at each level, according to the Elias' decoder analysis\cite{Elias:errorfreecoding54}. 

\begin{lemma}
	For an \ac{SPC}-\ac{PC}, the erasure probability of the first decoded information bit under \ac{SC} decoding \big(given by \eqref{eq:Worst_Channel_Arikan_at_last_stage}\big) is equal to the erasure probability of each information bit under Elias' decoding. \label{lem:worst_channel}
\end{lemma}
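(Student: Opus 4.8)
The plan is to show that the per-level erasure recursion is \emph{identical} under the two decoders, so that composing it over the $m$ dimensions yields the same value. First I would pin down the per-level behaviour of a single $(n_\ell,n_\ell-1)$ \ac{SPC} code under Elias' decoding over the \ac{BEC}. From the soft-message rule \eqref{eq:metric_calc_ED}, over the \ac{BEC} bit $i$ of a local code is resolved exactly when its own channel value is known, or when all of the other $n_\ell-1$ positions are unerased (so the parity check reconstructs it); equivalently, an erased bit survives iff at least one of the remaining $n_\ell-1$ positions is also erased. Hence, if the $n_\ell$ inputs are i.i.d.\ erased with probability $\epsilon$, each output bit is erased with probability $\epsilon\bigl(1-(1-\epsilon)^{n_\ell-1}\bigr)$, and by the symmetry of the single parity check this value is the same for every position $i$.

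Next I would lift this to the full \ac{PC}. Because of the product (interleaved) structure, the $n_\ell$ bits entering any local code at level $\ell$ originate from $n_\ell$ distinct local codes of level $\ell-1$; over a memoryless channel they are therefore independent and, by the per-level symmetry just established, identically distributed. Writing $\epsilon_\ell$ for the common bit erasure probability after level $\ell$ under Elias' decoding, this gives the recursion $\epsilon_\ell=\epsilon_{\ell-1}\bigl(1-(1-\epsilon_{\ell-1})^{n_\ell-1}\bigr)$ with $\epsilon_0=\epsilon$, and every information bit carries erasure probability $\epsilon_m$ after the final sweep.

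Finally, I would compare this with the \ac{SC} recursion. Setting $i=1$ in \eqref{eq:erasure_SPC} gives $\epsilon^{(1)}=\epsilon\bigl(1-(1-\epsilon)^{n_\ell-1}\bigr)$, which is exactly the Elias per-level map: the first \ac{SC}-decoded bit has no preceding decisions to exploit, so it relies on all $n_\ell-1$ other positions, precisely as Elias' rule does for every bit. Thus \eqref{eq:Worst_Channel_Arikan}, the level recursion for $\epsilon_\ell^{(1)}$, coincides term-by-term with the Elias recursion for $\epsilon_\ell$, and both start from $\epsilon_0=\epsilon$. A straightforward induction on $\ell$ then yields $\epsilon_m^{(1)}=\epsilon_m$, i.e.\ $q_{\mathrm{max}}$ in \eqref{eq:Worst_Channel_Arikan_at_last_stage} equals the common per-bit erasure probability under Elias' decoding.

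The main obstacle is the independence / identical-distribution justification in the second step: one must argue from the \ac{PC} interleaver that the inputs to each local decoder at a given level are genuinely i.i.d., so that the scalar recursion is exact rather than merely an approximation, and that Elias' one-sweep schedule introduces no cross-dependence among the codes of a single dimension. Once this is secured, together with the symmetry of the \ac{SPC} component code that makes all per-bit erasure probabilities equal under Elias, the equality of the two recursions—and hence of the two erasure probabilities—follows immediately.
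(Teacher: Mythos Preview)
Your argument is correct and, in fact, more thorough than what the paper provides: the paper explicitly \emph{skips} the proof as ``intuitive,'' offering only the one-line operational remark that \ac{SC} decoding of the first information bit uses the channel output $y_1^n$ alone (there being no prior decisions to feed back), exactly as Elias' decoder does for every bit via \eqref{eq:metric_calc_ED}. Your recursion-matching route---deriving the per-level Elias map $\epsilon\mapsto\epsilon\bigl(1-(1-\epsilon)^{n_\ell-1}\bigr)$, observing it coincides with the $i=1$ case of \eqref{eq:erasure_SPC}, and inducting over levels---is the natural formalization of that intuition. The independence issue you flag is real and correctly resolved: in the product structure the $n_\ell$ inputs to any level-$\ell$ local decoder depend on disjoint blocks of channel symbols, so they are genuinely i.i.d.\ and the scalar recursion is exact rather than a density-evolution-style approximation. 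The paper's operational remark implicitly relies on the two computation trees being identical (hence equal erasure events pointwise, not just in probability); your argument reaches the same conclusion by making the probabilistic bookkeeping explicit.
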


We skip the proof as it is intuitive. \ac{SC} decoding makes use of the observation $ y_1^n $ only to decode the first information bit as it is the case for Elias' decoding to decode each information bit \big(see \eqref{eq:metric_calc_ED}\big). However, \ac{SC} decoding exploits also the decisions on the preceding bits to decode the other information bits \big(see \eqref{eq:metric_calc_BEC}\big). As a result of Lemma \ref{lem:worst_channel}, the bound \eqref{eq:LooseUB_Product} holds also for Elias' decoding.

\begin{theorem}
	The block error probability $P_{\E}$ of an \ac{SPC}-\ac{PC} over the \ac{BEC}($ \epsilon $) under Elias' decoding \cite{Elias:errorfreecoding54} is bounded as
	\begin{equation} \label{eq:bounds_ED}
		q_{\mathrm{max}} \leq P_{\E} \leq k q_{\mathrm{max}}.
	\end{equation} 
\label{theorem:bridge}
\end{theorem}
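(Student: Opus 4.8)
The plan is to obtain both inequalities as an immediate consequence of Lemma~\ref{lem:worst_channel} together with elementary probability, so that no new analysis of the decoder itself is required. The single fact I would carry over from that lemma is that, under Elias' decoding over the \ac{BEC}, \emph{every} information bit is erased with one and the same probability, namely $q_{\mathrm{max}}$. This collapses the $k$ a~priori distinct per-bit erasure probabilities into a single number and is precisely what makes a two-sided bound of this shape possible.

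First I would fix the relevant error events. Over the \ac{BEC}, the erasure decoding rule \eqref{eq:decision_BEC} never emits an incorrect bit value: a bit is either recovered exactly or declared an erasure, so Elias' decoder produces no miscorrections. Hence, writing $\mathcal{E}_i$ for the event $\{\hat{u}_i = \texttt{e}\}$ that the $i$-th information bit is left erased, a block error occurs \emph{exactly} when at least one of these events happens, i.e.,
\[
P_{\E} = \Prob\!\left(\bigcup_{i=1}^{k}\mathcal{E}_i\right),\qquad \Prob(\mathcal{E}_i)=q_{\mathrm{max}}\ \text{for every }i,
\]
where the per-bit probability is supplied by Lemma~\ref{lem:worst_channel}. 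The upper bound then follows from the union bound,
\[
P_{\E}\leq \sum_{i=1}^{k}\Prob(\mathcal{E}_i)=k\,q_{\mathrm{max}},
\]
which is also the way one recovers the earlier remark that \eqref{eq:LooseUB_Product} applies to Elias' decoding; the lower bound follows from the trivial containment $\mathcal{E}_1\subseteq\bigcup_{i}\mathcal{E}_i$, giving $P_{\E}\geq\Prob(\mathcal{E}_1)=q_{\mathrm{max}}$.

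There is no genuinely hard step here. The only point requiring care is the \ac{BEC}-specific claim that the decoder never miscorrects, which is what guarantees that the block-error event coincides with the union of the single-bit erasure events rather than merely containing it or being contained in it. I would also note that both bounds are loose in general, since the events $\mathcal{E}_i$ are correlated through the common observation $y_1^n$ and the product structure; the union bound and the containment bound, however, are insensitive to this dependence, so they yield the stated inequalities with no further work.
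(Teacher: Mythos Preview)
Your proposal is correct and follows essentially the same route as the paper: define per-bit error events, write the block error as their union, apply the union bound and the trivial containment, and invoke Lemma~\ref{lem:worst_channel} to identify every per-bit probability with $q_{\mathrm{max}}$. Your extra remark that over the \ac{BEC} Elias' decoder never miscorrects (so bit error $=$ bit erasure) is a helpful justification the paper leaves implicit, but otherwise the arguments coincide.
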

\vspace{-2em}

\begin{proof}
	The block error event is defined as
	\begin{equation*}
		\mathcal{E}_{\E} \triangleq \{(u_1^k, y_1^n)\in\mathcal{X}^k\times\mathcal{Y}^n: \uED_1^k(y_1^n) \neq u_1^k\}
	\end{equation*}
	where $ \uED_1^k(y_1^n) $ is the output of Elias' decoder, obtained by using \eqref{eq:permutation}, \eqref{eq:decision_BEC} and \eqref{eq:metric_calc_ED}. The bit error event is defined as
	\begin{equation*}\label{eq:elias_bit_error_event}
		\mathcal{B}_{i} \triangleq \{(u_1^k, y_1^n)\in\mathcal{X}^k\times\mathcal{Y}^n: \uED_i(y_1^n) \neq u_i\} \quad \text{for}\quad i=1,\ldots,k.
	\end{equation*}
	The block error event satisfies $ \mathcal{E}_{\E} = \cup_{i = 1}^{k} \mathcal{B}_i $, which leads to
	\begin{equation}
		\max_{i = 1,\dots,k} P(\mathcal{B}_i) \leq P_{\E} \leq \sum_{i=1}^{k} P(\mathcal{B}_i).
		\label{eq:bound_MAP}
	\end{equation}
	We conclude the proof by combining \eqref{eq:bound_MAP} and Lemma \ref{lem:worst_channel}.
\end{proof}
\begin{figure}[]
	\begin{center}
		\begin{tikzpicture}[font=\footnotesize]
\begin{semilogyaxis}[
	 mark options={solid,scale=1.2},
     width=0.99*\columnwidth,
     height=0.9*\columnwidth,
     grid=both,
     grid style={dotted,gray!50},
    ylabel near ticks,
    xlabel near ticks,
	xmin=0.1,
	xmax=0.5,
	ymin=1e-3,
	ymax=1,
	xlabel={Erasure probability, $\epsilon$},
	ylabel={Block Error Rate}
	]

\addplot[color=red,line width = 0.3pt,solid,mark=o] table[x=erasure,y=P] {prob_ml.txt}; \label{leg:prob_ml} 
\addplot[color=red,line width = 0.3pt,solid,mark=square] table[x=erasure,y=P] {prob_ed.txt}; \label{leg:prob_ed}  
\addplot[color=red,line width = 0.3pt,solid,mark=star] table[x=erasure,y=P] {prob_scd.txt}; \label{leg:prob_scd}

\addplot[color=black,line width = 0.3pt,solid,mark=diamond] table[x=erasure,y=P] {pub_loosy.txt}; \label{leg:pub_loosy} 

\addplot[color=black,line width = 0.3pt,dashdotted] table[x=erasure,y=P] {pub.txt}; \label{leg:pub} 

\addplot[color=black,line width = 0.3pt,solid,mark=triangle] table[x=erasure,y=P] {plb.txt}; \label{leg:plb}

\addplot[color=black,line width = 0.3pt,solid] table[x=erasure,y=P] {pub_truncated.txt};\label{leg:pub_truncated} 

\end{semilogyaxis}
\end{tikzpicture}
	\end{center}
	\caption{Block error rate vs. erasure probability for the $(9,4)$ \ac{SPC}-\ac{PC} under \ac{ML} \eqref{leg:prob_ml}, Elias' \ref{leg:prob_ed} and \ac{SC} \ref{leg:prob_scd} decoding algorithms. The loose upper bound \eqref{eq:LooseUB_Product} \eqref{leg:pub_loosy}, the upper bound given by the right hand side of \eqref{eq:UB_Product} \eqref{leg:pub} and the lower bound given by the left hand side of \eqref{eq:UB_Product} \eqref{leg:plb} are provided, together with the truncated union bound \eqref{eq:bec_tub} \eqref{leg:pub_truncated}.}\label{fig:FER49}
\end{figure}
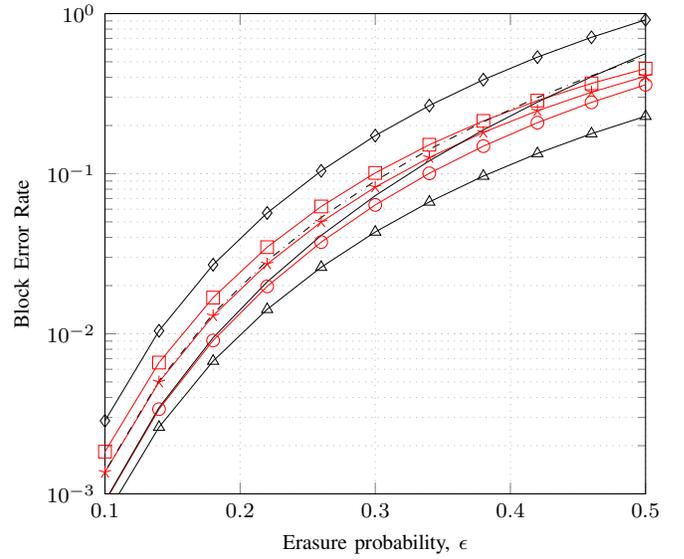

\begin{theorem} For an \ac{SPC}-\ac{PC} over the \ac{BEC}($ \epsilon $),
	\begin{equation}
		P_{\SC} \leq P_{\E}.
	\end{equation}
	\label{theorem:comparison}
\end{theorem}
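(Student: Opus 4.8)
The plan is to compare the two decoders event-by-event over the \ac{BEC}, exploiting that over the \ac{BEC} each decoder is a deterministic function of the erasure pattern and never produces a wrong hard decision (only erasures). First I would reduce $P_{\SC}$ to a union of genie-aided bit-erasure events. For $i=1,\dots,k$, let $G_i$ denote the event that the genie-aided \ac{SC} decoder---supplied with the correct values $u_1^{i-1}$ of the preceding information bits---erases the $i$-th bit; by definition $\Prob(G_i)=q_i$. Writing $i^\star$ for the first information bit erased in an actual \ac{SC} run (resp.\ the smallest index whose pattern lies in $G_{i^\star}$), a short induction on the decoding order shows that the actual and genie-aided decoders agree on all bits preceding their first failure, so the two notions of ``first failure'' coincide pattern-by-pattern. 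Hence $\{\hat u_1^k\neq u_1^k\}=\bigcup_{i=1}^k G_i$ as events, and $P_{\SC}=\Prob\big(\bigcup_{i=1}^k G_i\big)$.

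Next I would recall from the proof of Theorem \ref{theorem:bridge} that $P_{\E}=\Prob\big(\bigcup_{i=1}^k \mathcal{B}_i\big)$, where $\mathcal{B}_i$ is the Elias bit-error event. The theorem then follows from monotonicity of probability under unions, once I establish the pointwise containment
\[
	G_i\subseteq \mathcal{B}_i,\qquad i=1,\dots,k,
\]
i.e.\ that whenever Elias' decoder recovers bit $i$, so does the genie-aided \ac{SC} decoder. This is the step I expect to carry the real content.

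To prove the containment I would fix $i$ and an erasure pattern and work on the tree rooted at $u_i$ that both decoders traverse bottom-up, one local \ac{SPC} code per node. Over the \ac{BEC} the message a node sends toward $u_i$ is recovered (i.e.\ $\pm\infty$) precisely when its on-path input is recovered \emph{or} all of its siblings are recovered; the only difference between the two decoders is that, by \eqref{eq:metric_calc} versus \eqref{eq:metric_calc_ED}, Elias requires \emph{all} $n_\ell-1$ siblings while the genie-aided \ac{SC} decoder requires only the siblings in positions \emph{after} $u_i$'s branch, the earlier ones being delivered for free as correct hard messages $\lambda_{j,z}^{(\ell)}$ by the genie. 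I would then induct from the leaves to the root: at the leaves the genie only adds knowledge, so Elias-recovered $\Rightarrow$ \ac{SC}-recovered; at an internal node, if Elias recovers its outgoing message then either the on-path input or all siblings are Elias-recovered, whence by the inductive hypothesis the same inputs are \ac{SC}-recovered, and in particular the on-path input or all \emph{future} siblings are recovered---exactly the genie-aided \ac{SC} recovery condition. Reading off the induction at the root yields $G_i\subseteq\mathcal{B}_i$, and combining with the two displayed identities gives $P_{\SC}=\Prob\big(\bigcup_i G_i\big)\le \Prob\big(\bigcup_i \mathcal{B}_i\big)=P_{\E}$.

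The main obstacle is this containment: the naive comparison using the \emph{actual} (rather than genie-aided) past fails, because an erased past decision forces the fallback \eqref{eq:metric_calc_BEC} and can make \ac{SC} locally weaker than Elias; the genie is precisely what removes this coupling, and the care lies in matching the genie-supplied past bits to the past-sibling positions at every node of the tree under the paper's labeling. The reduction in the first paragraph is what licenses replacing the actual decoder by its genie-aided version without changing $P_{\SC}$.
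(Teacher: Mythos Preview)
Your approach is correct but organizes the argument differently from the paper. Both proofs ultimately establish pointwise containment of error events over erasure patterns and both proceed level by level through the local \ac{SPC} codes; the difference is in how the past decisions are handled.

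You decouple the past via a genie: you first rewrite $P_{\SC}=\Prob\big(\bigcup_i G_i\big)$ using the (correct) observation that the actual and genie-aided \ac{SC} decoders coincide up to the first failure, and then prove the clean bitwise containment $G_i\subseteq\mathcal{B}_i$ by inducting over the tree with all past $\lambda$'s supplied for free. This sidesteps entirely the case where an erased past forces the fallback rule \eqref{eq:metric_calc_BEC}. The paper instead compares the \emph{actual} \ac{SC} decoder to Elias' decoder directly and confronts that case head-on: it observes that if some past $\lambda_{j,z}^{(\ell)}=\texttt{e}$ in \ac{SC}, then necessarily $\rho_{j,z}^{(\ell)}=0$, so the product in \eqref{eq:metric_calc_ED} vanishes as well and Elias computes the same message at that node; when no past is erased, \ac{SC} plainly has at least as much information. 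Thus your remark that ``the naive comparison using the actual past fails'' is exactly the point the paper addresses---it does not fail, but it requires this extra implication $\lambda_{j,z}^{(\ell)}=\texttt{e}\Rightarrow\rho_{j,z}^{(\ell)}=0$, which the paper states rather than derives. Your genie route trades that implication for the reduction in your first paragraph; the paper's direct route is shorter but leaves more to the reader.
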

\vspace{-2em}
\begin{proof}
	Over the \ac{BEC}, both decoders can either make a correct decision or get an erasure for an information bit according to \eqref{eq:decision_BEC}. Therefore, \ac{SC} decoding cannot make use of any wrong bit decision.  Recall \eqref{eq:metric_calc_BEC} for \ac{SC} decoding and \eqref{eq:metric_calc_ED} for Elias' decoding. Under the former, the preceding decisions are exploited. However, each bit decision under the latter is made in the same manner as if one of the past decisions coming from left is an erasure for a local decoder under \ac{SC} decoding. More precisely, assume that we apply both \ac{SC} decoding and Elias' decoding to an observation $y_1^n$. Furthermore, assume at an intermediate step, the \ac{SC} decoder computes the output message $m_{j,i}^{(\ell)}$ corresponding to the $i$-th information bit of the $j$-th local code at the $\ell$-th dimension such that $1\leq\ell\leq m$, $1\leq j\leq\eta_\ell$, $1\leq i\leq n_\ell$. Assume also that there exists at least one $z$, $1\leq z\leq i-1$, such that $\lambda_{j,z}^{(\ell)} = \texttt{e}$. Note that having $\lambda_{j,z}^{(\ell)} = \texttt{e}$ implies also that $\rho_{j,z}^{(\ell)} = 0$. Then, \eqref{eq:metric_calc_BEC} computes the message as $m_{j,i}^{(\ell)} = \rho_{j,i}^{(\ell)}$. For the same scenario under Elias' decoding, \eqref{eq:metric_calc_ED} computes the same message because $\rho_{j,z}^{(\ell)} = 0$ and $\tanh(0) = 0$. Therefore, the probability that they can decode the first bit correctly is the same. Once both decoded the first bit, then the \ac{SC} decoding is at least as good as Elias' decoding.
\end{proof}
\begin{example}
		Consider transmission using the $ (9,4) $ product code with the received vector $ y_1^9 = \{\texttt{e},0,1,0,\texttt{e},\texttt{e},\texttt{e},\texttt{e},1\} $. Under \ac{SC} decoding, the message is decoded correctly while Elias' decoding would fail to decode the 4th information bit.
\end{example}
The bounds and block error probabilities under \ac{SC}, Elias' and \ac{ML} decoding for the $ (9,4) $ \ac{SPC}-\ac{PC} are depicted in Fig. \ref{fig:FER49}. For completeness, we also provide the truncated union bound on the block error probability under \ac{ML} decoding given by
\begin{equation}
	P_B\approx \Amin\epsilon^{\dmin}.
	\label{eq:bec_tub}
\end{equation}
The given error probabilities are not simulation results, but are computed exactly under the three decoding algorithms thanks to the short length of the code. We observe that the bounds are tight especially in the low error rate regime.

\subsection{Performance over the binary input AWGN Channel}
\begin{figure}[]
	\begin{center}
		\begin{tikzpicture}[font=\footnotesize]
\begin{semilogyaxis}[
	 mark options={solid,scale=1.2},
     width=0.99*\columnwidth,
     height=0.9*\columnwidth,
     grid=both,
     grid style={dotted,gray!50},
    ylabel near ticks,
    xlabel near ticks,
	xmin=3,
	xmax=6,
	ymin=1e-6,
	ymax=1,
	xlabel={$E_b/N_0$, [dB]},
	ylabel={Block Error Rate}
	]

\addplot[color=red,line width = 0.3pt,solid,mark=star] table[x=snr,y=P] {216_125_scd.txt}; \label{leg:216_125}
\addplot[color=red,line width = 0.3pt,solid,mark=o] table[x=snr,y=P] {125_64_scd.txt}; \label{leg:125_64}

\addplot[color=black!90,line width = 0.3pt,dashed,mark=star] table[x=snr,y=P] {pub_truncated_216_125.txt}; \label{leg:pub_truncated_216_125}
\addplot[color=black!90,line width = 0.3pt,solid] table[x=snr,y=P] {pub_truncated_125_64.txt}; \label{leg:pub_truncated_125_64} 

\addplot[color=black!90,line width = 0.3pt,solid,mark=star] table[x=snr,y=P] {ed_awgn_216_125.txt}; \label{leg:216_125_ed}
\addplot[color=black!90,line width = 0.3pt,solid,mark=o] table[x=snr,y=P] {ed_awgn_125_64.txt}; \label{leg:125_64_ed}

\end{semilogyaxis}
\end{tikzpicture}
	\end{center}
	\caption{Block error rate vs. signal-to-noise ratio, under \ac{SC} decoding for the $(216,125)$ code \eqref{leg:216_125} and for the $(125,64)$ code \eqref{leg:125_64}, and under Elias' decoding (\ref{leg:216_125_ed} and \ref{leg:125_64_ed}, respectively), compared with the respective truncated union bounds \eqref{eq:awgn_tub} (\ref{leg:pub_truncated_216_125} and \ref{leg:pub_truncated_125_64}).
	}\label{fig:combined}
\end{figure}
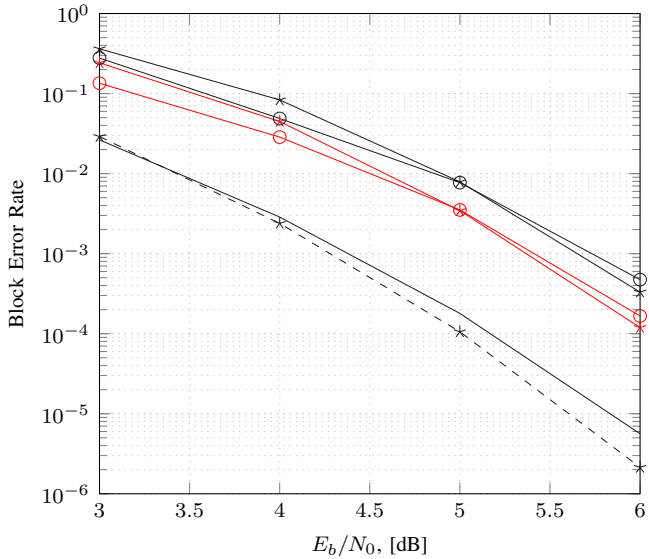

In Fig. \ref{fig:combined}, we provide the block error rate performance of the $3$-dimensional $ (125,64) $ and $ (216,125) $ \ac{SPC}-\acp{PC}, obtained by iterating $(5,4)$ and $(6,5)$ \ac{SPC} codes, respectively, under both \ac{SC} and Elias' decoding algorithms for the \ac{B-AWGN} channel. In the figure, we also show the truncated union bound
\begin{equation}
	P_B\approx \frac{1}{2}\Amin\erfc\sqrt{\dmin R\tfrac{E_b}{N_0}}
	\label{eq:awgn_tub}
\end{equation}
which, for high $E_b/N_0$, approximates tightly the \ac{ML} decoding performance. Here, $E_b$ denotes the energy per information bit and $N_0$ is the single-sided noise power spectral density. For both codes, the difference of \ac{SC} decoding to the truncated union bound is around $ 1 $ dB at block error rate of $ 10^{-4} $. In addition, \ac{SC} decoding outperforms Elias' decoding.

	\section{Conclusions}

We introduced successive cancellation decoding of the class of product codes obtained by iterating single parity-check codes. Thanks to the structure of \ac{SPC}-\acp{PC}, we showed how to compute the decision metrics recursively under \ac{SC} decoding, which resembles \ac{SC} decoding of polar codes. In addition, we introduced an analysis on the binary erasure channel, yielding lower and upper bounds on the block error probability. The performance of \ac{SC} decoding is then compared with that of the original decoder of \acp{PC} introduced by Elias, showing that \ac{SC} decoding yields lower block error probability than Elias' decoding. Finally, it is concluded for the analyzed codes that the low-complexity \ac{SC} decoding can outperform Elias' decoding by $\sim 0.35$ dB.



\end{document}